\newcommand{\norm}[1]{\left\lVert#1\right\rVert}
\newcommand{\wlen}{\Lambda}
\title{Optimal Satellite Maneuvers for Spaceborne Jamming Attacks}
\author{Filippos Fotiadis\footnote{Postdoctoral Researcher, Oden Institute for Computational Engineering \& Sciences, e-mail: ffotiadis@utexas.edu}, Quentin Rommel\footnote{Graduate Research Assistant, Department of Aerospace Engineering \& Engineering Mechanics, e-mail: quentin.rommel@utmail.utexas.edu}, Brian M. Sadler\footnote{Senior Research Fellow, Oden Institute for Computational Engineering \& Sciences, e-mail: brian.sadler@ieee.org}, Ufuk Topcu\footnote{Professor, Department of Aerospace Engineering \& Engineering Mechanics, e-mail: utopcu@utexas.edu}}
\affil{The University of Texas at Austin, Austin, TX, 78712}
\begin{document}

\maketitle

\begin{abstract}
Satellites are becoming exceedingly critical for communication, making them prime targets for cyber-physical attacks. We consider a rogue satellite in low Earth orbit that jams the uplink communication between another satellite and a ground station. To achieve maximal interference with minimal fuel consumption, the jammer carefully maneuvers itself relative to the target satellite's antenna. We cast this maneuvering objective as a two-stage optimal control problem, involving i) repositioning to an efficient jamming position before uplink communication commences; and ii) maintaining an efficient jamming position after communication has started. We obtain the optimal maneuvering trajectories for the jammer and perform simulations to show how they enable the disruption of uplink communication with reasonable fuel consumption.
\end{abstract}

%\section

%{Nomenclature}

%{\renewcommand\arraystretch{1.0}
%\noindent\begin{longtable*}{@{}l @{\quad=\quad} l@{}}
%$A$  & amplitude of oscillation 
%\end{longtable*}}

\section{Introduction}
\vspace{3mm}

Space systems operate in a hostile environment where debris, solar radiation, and adversaries can compromise their integrity. Adversaries, in particular, are an emerging attack vector against them owing to the strategic planning of spacefaring nations and the growing accessibility of space infrastructure to
non-state entities \cite{cubesat, securityinspace}. A notable example is the 2022 cyberattack launched against the 
satellite operator Viasat, which left several thousand internet users across Europe with no internet access \cite{viasat}. At the same time, various other adversarial attacks take place but remain largely unreported \cite{ascend2}.
Ensuring the secure operation of space systems thus requires that we understand every avenue an adversary may use to launch an attack against them \cite{survey}.

Within the context of space systems, satellites are particularly central in communications and are thus a target of \textit{jamming attacks} \cite{jamming1, jamming2, jamming3, jamming4}.  Jamming takes place when a friendly communication signal arriving at a particular antenna is corrupted by electromagnetic interference sent by an adversary from a different source point. Specifically for satellites, the most common sources of jamming are adversaries stationed on the ground since ground-based jamming attacks are cheap to launch and require no access to a space system. Such attacks are well-documented both in prior real-world events \cite{ruegamer2015jamming, event1} and in the literature on space cybersecurity \cite{groundjam1, groundjam2, groundjam3, groundjam4}. However, cybersecurity research has paid less attention to the potential of space-based jamming attacks, which are becoming more feasible due to lower barriers in space access. These attacks pose a graver threat in uplink signal disruption owing to fewer line-of-sight constraints and smaller attenuation in jamming power, and in downlink signal disruption owing to easier access to remote areas. 

%Paragraph on sat-to-sat jamming, say they don't look at maneuvers.
Research on space security focuses on both identifying potential threats and proposing security measures to compensate for them. For instance, \cite{ascend} describes how a rogue satellite may maneuver to launch sensor or actuator jamming attacks towards other space systems, and proposes proactive and reactive approaches to defend against these attacks. A particularly desirable defense mechanism is that of attack detection \cite{satdet1, satdet2, satdet3}, as it allows one to switch into mitigation mode once adversarial interference is identified. In \cite{satdet3}, the authors note that developing correct detection mechanisms becomes complicated when the adversary is performing a maneuver to maximize jamming efficiency. Yet, little is understood about the design and feasibility of such a maneuver, particularly given the limited fuel resources that exist in space.
Moreover, while prior work has extensively investigated adversarial maneuvers in the context of pursuit-evasion \cite{mehlman2024cat, evasion1, evasion2, evasion3, evasion4}, there has been less attention to designing and understanding maneuvers for optimal jamming.

We address this largely unexplored problem by showing how a jammer satellite can maneuver in low Earth orbit to maximize jamming impact on a defender satellite, while using only minimal fuel resources. Our contributions are threefold:

\begin{itemize}
\item \textbf{Jamming impact quantification as a function of intersatellite position:} We show that the impact of the jammer's interference depends solely on the relative position of the jammer with respect to the defender. We make this relation explicit by capturing jamming impact through the signal-to-interference-plus-noise ratio (SINR).

\item \textbf{Orbital mechanics framework for fuel-efficient jamming:} Using our jamming impact quantification, we formulate a two-stage maneuvering problem in which the jammer needs to position itself relative to the defender to maximize interference. The first stage of the maneuver repositions the jammer \textit{before} the communication window to minimize SINR at the window's start while minimizing fuel consumption. The second one keeps the jammer at an efficient jamming position \textit{during} the communication window while keeping fuel consumption low.

\item \textbf{Derivation of orbital maneuvers using optimal control:} We show that computing the first orbital maneuver boils down to solving an algebraic equation. Moreover, we derive a boundary value problem whose numerical solution yields the second, station-keeping maneuver.
\end{itemize}

We empirically validate our findings in a simulation of two low Earth orbit satellites, showing how they allow a jammer to disrupt uplink communication without engaging in a fuel-expensive pursuit of the defender. Our work is the first to show how this is achievable by combining communication physics, orbital mechanics, and optimal control theory.

\section{Problem Formulation}\label{sec:pr}
\vspace{3mm}

\subsection{Jamming Impact as a Function of Intersatellite Position}
\vspace{3mm}

Consider a satellite, called the defender, that is on a circular low Earth orbit and uses an antenna to receive a friendly communication signal from a ground station. 
Consider also an adversarial satellite in the vicinity of the defender, called the jammer, whose purpose is to disrupt the defender's communication with the ground station using electromagnetic interference, i.e., jamming. We consider the following assumptions.

\begin{assumption}\label{ass:1}
The jammer's and the ground station's antennas constantly point to the defender.
\end{assumption}
\begin{assumption}\label{ass:2}
    The defender's antenna constantly points to the Earth's center.
\end{assumption}

Denote the friendly communication signal that the ground station sends to the defender as $s(t)\in\mathbb{C}$, and the interference signal that the jammer satellite targets towards the defender as $a(t)\in\mathbb{C}$. Under Assumptions \ref{ass:1}-\ref{ass:2}, the received signal plus jamming at the defender is
\begin{equation}\label{eq:rec}
y(t)=h_s(t)s(t)+h_a(t)a(t)+\eta(t),~t\ge0,
\end{equation}
where $\eta(t)\in\mathbb{C}$ is measurement noise following the distribution $\mathcal{N}(0,\sigma_{\eta}^2)$, and $h_s(t)\in\mathbb{C}, h_a(t)\in\mathbb{C}$ are the
receiving pattern functions of the friendly and jamming signals on the defender.
Note that, since the defender and the jammer satellites are in orbital motion, the receiving pattern functions $h_s(t), h_a(t)$
are constantly changing. This is because the powers of the friendly and the jamming signal on the defender's antenna vary in time, depending on the traveled distance as well as the angle of signal reception.

To quantify the effect of the jammer's interference on the defender's antenna, we use 
the signal-to-interference-plus-noise ratio (SINR) metric. This follows the formula
\begin{equation}\label{eq:SINR}
\mathrm{SINR}(t)=\frac{P_sH_s(t)}{P_aH_a(t)+\sigma_{\eta}^2},
\end{equation}
where $P_s, P_a$ are the average powers of $s(t)$ and $a(t)$, and $H_s(t)$ and $H_a(t)$ are functions that contain information regarding the sender's and jammer's antenna gains, the defender's antenna gains, and free-space path losses. Specifically for the jammer, under Assumptions \ref{ass:1}-\ref{ass:2}, it follows that
\begin{equation}\label{eq:H}
\begin{split}
H_a(t)=H_a(p(t))=G_aG_d(\theta(p(t)))L(\norm{p(t)}),
\end{split}
\end{equation}
where $p(t)=[x(t)~y(t)~z(t)]^\mathrm{T}$ is the jammer's position relative to the defender in the local Hill's frame shown in Figure \ref{fig:ref}, $G_a$ is the jammer's antenna transmission gain, $G_d(\theta(p(t)))$ is the defender's antenna reception gain that depends on the angle of reception $\theta(p(t))\in\mathbb{R}$, and $L(\norm{p(t)})$ is the free-space path loss function. For the angle $\theta(t)$, it follows from geometry and Figure \ref{fig:ref} that
\begin{equation}\label{eq:theta}
\mathrm{cos}\theta(p(t))=-\frac{x(t)}{\sqrt{x^2(t)+y^2(t)+z^2(t)}},
\end{equation}
while for the free-space path loss, we have the formula
\begin{equation}\label{eq:FSPL}
L(\norm{p(t)})=\left(\frac{\wlen}{{4\pi \norm{p(t)}}} \right)^2.
\end{equation}
In addition, the parameter $\wlen$ in \eqref{eq:FSPL} is the wavelength of the communication, which we assume to be the same both for the jammer and the defender. Hence, from \eqref{eq:SINR}-\eqref{eq:FSPL}, we notice that the effect of jamming, as quantified by the SINR, is a direct function of the relative position $p(t)$ of the jammer with respect to the defender satellite.

\begin{figure}[t]
  \centering
\begin{tikzpicture}[scale=1]

  % Earth
  \node at (0,0) {\includegraphics[width=6cm]{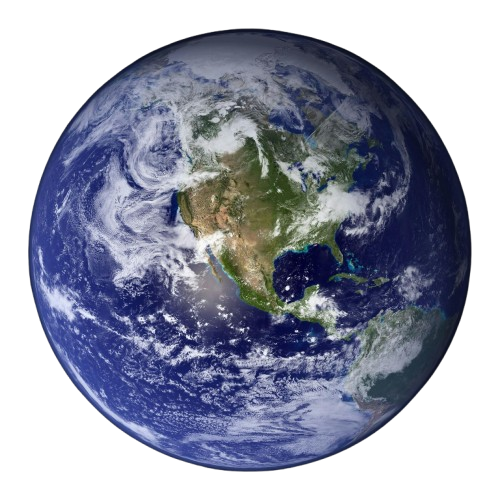}};
  
  % Axes and orbit path
    \begin{scope}[rotate around={30:(0,0)}] % 30 deg inclination
    \draw[thick, dashed] (6,0) arc[start angle=0, end angle=360, x radius=6, y radius=1.5];
  \end{scope}
  %\draw[->, very thick, red] (0,0) -- (0,3) node[above] {$e_z$};
  % \draw[->, very thick, blue] (0,0) -- (3,0) node[right] {$e_y$};
  % \draw[->, very thick, green] (0,0) -- (-2.1,-2.1) node[below] {$e_x$};
  % etc.

  %antenna

  \node at (2,2) {\includegraphics[angle=9.5, width=0.8cm]{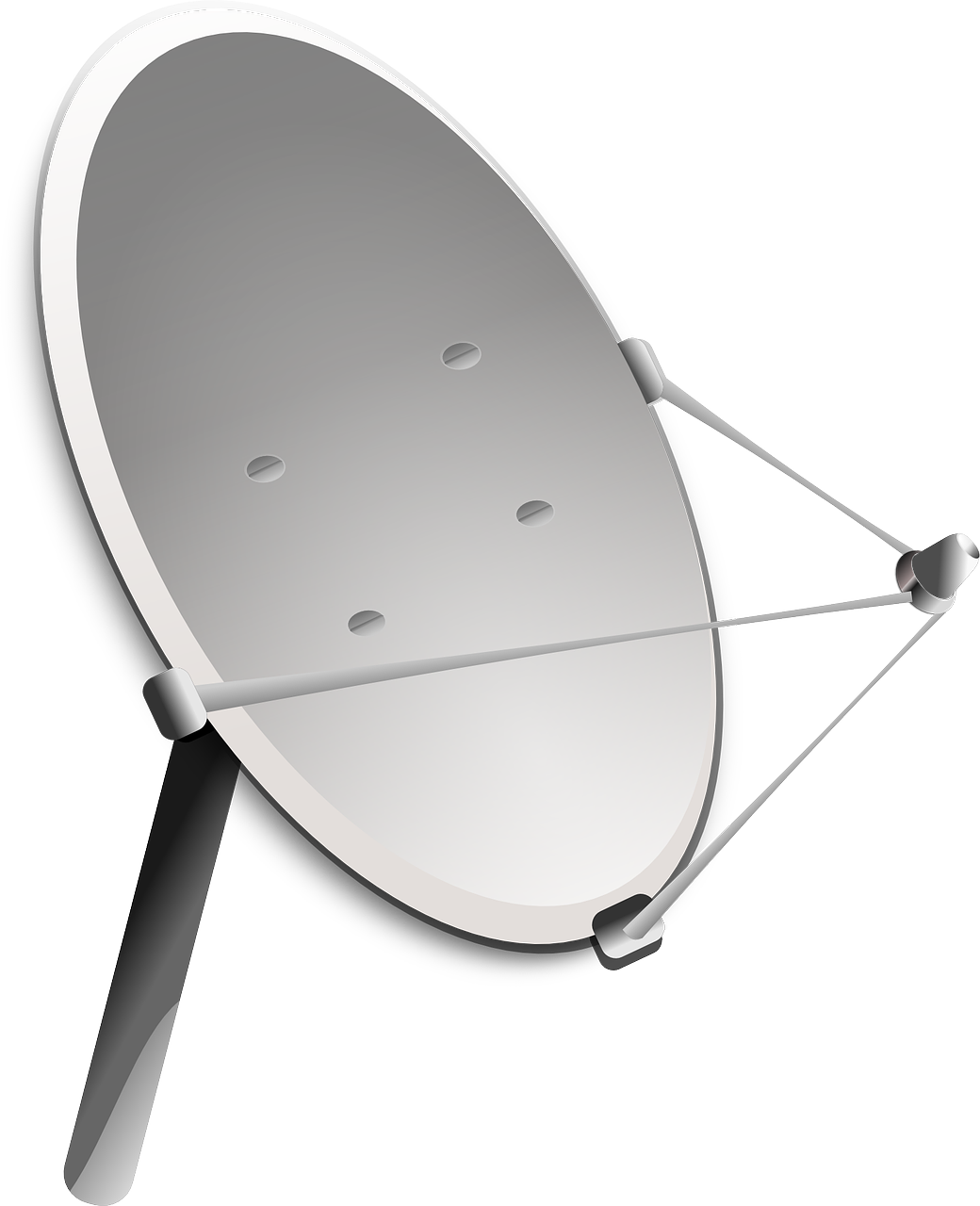}};

  % Satellite and body frame
  \node at (5.3,3.05) {\includegraphics[angle=9.5, width=2.5cm]{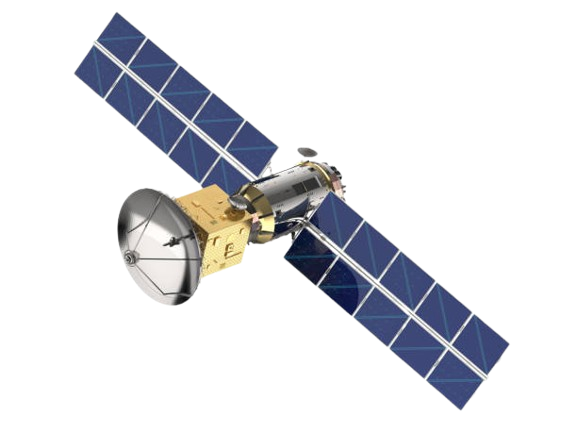}};

  \node at (3.8,0.9) {\includegraphics[angle=-135, width=2.5cm]{satellite.png}};

  \draw[->, very thick, green] (2.3,2.1) -- (4.7,2.8) node[right] {};

  \draw[->, very thick, red] (4.1,1.4) -- (4.7,2.8) node[right] {};
  \node at (3.8, 1.7) {\textcolor{red}{$a(t)$}};
  \node at (2.5,2.5) {\textcolor{green}{$s(t)$}};

  \draw[->, very thick, black] (5.17,3.07) -- ++(1,0.5) node[right] {$x$};
  \draw[->, very thick, black] (5.17,3.07) -- ++(-0.3, 1.1) node[above] {$z$};
  \draw[->, very thick, black] (5.17,3.07) -- ++(-0.8,0.8) node[above right] {$y$};

\end{tikzpicture}
\caption{The considered setup with the local Hill's frame centered on the defender. The ground station sends the friendly communication signal $s(t)$ to the defender, while the jammer disrupts this uplink communication with the interference signal $a(t)$.}\label{fig:ref}
\end{figure}

\subsection{Orbital Maneuvering Problem for Efficient Jamming}
\vspace{3mm}

Since the \textrm{SINR} \eqref{eq:SINR} is an explicit function of the position $p(t)$, 
the jammer wants to reposition itself appropriately relative to the defender's antenna in order to maximize the effect of its interference. Towards this end, consider the local reference frame centered on the defender, shown in Figure \ref{fig:ref}, with the $x$ axis pointing radially outward from the defender, the $z$ axis pointing to the orbit normal, and the $y$ axis pointing to the direction of motion. In this frame, one can describe the position of the jammer relative to the defender according to the Clohessy-Wiltshire equations:
\begin{equation}\label{eq:CW}
\begin{split}
\ddot{x}(t)&=3n^2x(t)+2n\dot{y}(t)+\frac{u_x(t)}{m},\\
\ddot{y}(t)&=-2n\dot{x}(t)+\frac{u_y(t)}{m},\\
\ddot{z}(t)&=-n^2z(t)+\frac{u_z(t)}{m}.
\end{split}
\end{equation}
Here, $t\ge0$ is time, $x, y, z\in\mathbb{R}$ are the Cartesian coordinates of the jammer in the local frame, $u_x, u_y, u_z\in\mathbb{R}$ are the external thrusts of the jammer, and $n$ is the orbital rate of the defender. Defining the full state as $w=[p^\mathrm{T}~v^\mathrm{T}]^\mathrm{T}\in\mathbb{R}^6$, where $p=[x~y~z]^\mathrm{T}$ is the position vector and $v=[\dot{x}~\dot{y}~\dot{z}]^\mathrm{T}$ the velocity vector,
and the control input as $u=[u_x~u_y~u_z]^\mathrm{T}\in\mathbb{R}^3$, one can also write \eqref{eq:CW} in the aggregate form
\begin{equation}\label{eq:CW_agg}
\dot{w}(t)=Aw(t)+Bu(t),~w(0)=w_0,
\end{equation}
where
\begin{equation}
A=\begin{bmatrix}0 & 0 & 0 & 1 & 0 & 0 \\ 0 & 0 & 0 & 0 & 1 & 0 \\ 0 & 0 & 0 & 0 & 0 & 1\\ 3n^2 & 0 & 0 & 0 & 2n & 0 \\ 0 & 0 & 0 & -2n & 0 & 0 \\ 0 & 0 & -n^2 & 0 & 0 & 0\end{bmatrix}, \qquad B=\begin{bmatrix}0 & 0 & 0 \\ 0 & 0 & 0 \\ 0 & 0 & 0 \\ \frac{1}{m} & 0 & 0 \\ 0 & \frac{1}{m} & 0 \\ 0 & 0 & \frac{1}{m}\end{bmatrix}.
\end{equation}

Given its relative orbital dynamics \eqref{eq:CW}, the purpose of the jammer is to design its thrust $u$ so that it positions itself in a way that optimizes the effect of jamming on the defender, i.e., the SINR \eqref{eq:SINR}, while also taking into account its limited fuel resources. 
\begin{problem}\label{pr:maneuver}
Find the thrust profile $u$ that i) leads the jammer to a position that maximizes interference at a target time $T>0$ while minimizing fuel; and ii) keeps the jammer in a position that maximizes interference over the interval $[T, T']$, $T'>0$, while minimizing fuel.
\end{problem}
The solution to Problem \ref{pr:maneuver} leads to a two-stage maneuver for the jammer where, in the first stage, the jammer repositions itself at an appropriate jamming location \textit{before} jamming begins and, at the second stage, the jammer keeps itself at an appropriate jamming location \textit{while} jamming is taking place. The rest of this paper focuses on mathematically characterizing and solving this problem.

\section{Thrust Control for Optimal Satellite-to-Satellite Jamming}\label{sec:main}

\subsection{Optimal Repositioning for Efficient Jamming}
\vspace{3mm}

To reposition itself for optimal jamming at a specific time $T>0$ while taking fuel limitations into account, and thus resolve the first part of Problem \ref{pr:maneuver}, the jammer may opt to solve the following optimal control problem 
\begin{equation}\label{eq:optT}
\begin{split}
&\min_{u\in\mathcal{U}} \int_0^T \frac{1}{2}u^\mathrm{T}(t) R_r u(t) \mathrm{d}t+a_r\mathrm{SINR}(p(T))\\ &\mathrm{s.t.} \quad \dot{w}(t)=Aw(t)+Bu(t),\\&~\qquad w(0)=w_0,
\end{split}
\end{equation}
where $R_r\succ0$ is a weighting matrix penalizing fuel consumption, $a_r>0$ is a weighting constant, and $\mathcal{U}$ is the space of measurable signals. Note that in \eqref{eq:optT},  we take into account the consideration of fuel limitations by penalizing the magnitude of $u$ in the running cost.

In what follows, we prove that one can obtain the optimal solution to \eqref{eq:optT} by solving a set of algebraic equations.
\begin{theorem}
Let $u^\star\in\mathcal{U}$ be the optimal solution to \eqref{eq:optT}. Then:
\begin{equation}
u^\star(t)=-R_r^{-1}B^\mathrm{T}e^{A^\mathrm{T}(T-t)}\begin{bmatrix}\mu \\ 0_3\end{bmatrix},
\end{equation}
where $p_f, v_f, \mu\in\mathbb{R}^3$ solve the equations
\begin{subequations}
\begin{align}\label{eq:bcx}
&\begin{bmatrix}p_f \\ v_f\end{bmatrix}=e^{AT}w_0-W_c\begin{bmatrix}\mu \\ 0_3\end{bmatrix},\\
& \mu=\frac{a_r P_aG_aP_sH_s(T)}{(P_aG_aG_d(\theta(p_f))L\left(\norm{p_f}\right)+\sigma_{\eta}^2)^2}\left(\left(\frac{\wlen}{4\pi}\right)^2\frac{2G_d(\theta(p_f))p_f}{\norm{p_f}^4}+\frac{L\left(\norm{p_f}\right)G_d'(\theta(p_f))}{\mathrm{sin}\theta(p_f)}\begin{bmatrix}-\frac{1}{\norm{p_f}}+\frac{x_f^2}{\norm{p_f}^3} \\ \frac{x_fy_f}{\norm{p_f}^3} \\ \frac{x_fz_f}{\norm{p_f}^3}\end{bmatrix} \right),\label{eq:bcl}
\end{align}
\end{subequations}
and where $W_c=\int_0^{T}e^{A\tau}BR_r^{-1}B^\mathrm{T}e^{A^\mathrm{T}\tau}\mathrm{d}\tau$ is the weighted controllability Gramian of $(A, B)$.
\end{theorem}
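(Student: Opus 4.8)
The plan is to read \eqref{eq:optT} as a Bolza optimal control problem with linear dynamics, a control cost that is quadratic and strictly convex (since $R_r\succ0$), and a terminal cost $a_r\,\mathrm{SINR}(p(T))$ that is a smooth---but in general nonconvex---function of the terminal position block of $w$, and then to extract the stated characterization from Pontryagin's first-order necessary conditions. Form the Hamiltonian $H(w,u,\lambda)=\tfrac12 u^\mathrm{T}R_r u+\lambda^\mathrm{T}(Aw+Bu)$. Because $\mathcal U$ imposes no pointwise bound on $u$ and $H$ is strictly convex in $u$, the Hamiltonian minimization collapses to the stationarity condition $\partial H/\partial u=R_r u+B^\mathrm{T}\lambda=0$, i.e. $u^\star(t)=-R_r^{-1}B^\mathrm{T}\lambda(t)$. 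The costate satisfies $\dot\lambda=-\partial H/\partial w=-A^\mathrm{T}\lambda$ with the transversality condition $\lambda(T)=\nabla_w\big(a_r\,\mathrm{SINR}(p(T))\big)$; since the SINR depends on $w$ only through the position block $p$ (see \eqref{eq:SINR}--\eqref{eq:FSPL}), the velocity block of $\lambda(T)$ vanishes, so we may write $\lambda(T)=[\mu^\mathrm{T}~0_3^\mathrm{T}]^\mathrm{T}$ with $\mu=a_r\,\nabla_p\mathrm{SINR}(p_f)$ and $p_f:=p(T)$. Solving the costate ODE backward from $T$ gives $\lambda(t)=e^{A^\mathrm{T}(T-t)}[\mu^\mathrm{T}~0_3^\mathrm{T}]^\mathrm{T}$, and substituting into the stationarity condition produces the claimed formula for $u^\star$. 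No abnormal multiplier can occur here because \eqref{eq:optT} has no terminal state constraint, so the normalization $\lambda_0=1$ used in $H$ is legitimate.

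It then remains to pin down $p_f$ and $\mu$ through the two algebraic relations. For \eqref{eq:bcx}: apply variation of constants to \eqref{eq:CW_agg}, $w(T)=e^{AT}w_0+\int_0^T e^{A(T-\tau)}Bu^\star(\tau)\,\mathrm{d}\tau$, substitute $u^\star(\tau)=-R_r^{-1}B^\mathrm{T}e^{A^\mathrm{T}(T-\tau)}[\mu^\mathrm{T}~0_3^\mathrm{T}]^\mathrm{T}$, and change variables $\tau\mapsto T-\tau$ inside the integral to recognize $\int_0^T e^{A\tau}BR_r^{-1}B^\mathrm{T}e^{A^\mathrm{T}\tau}\,\mathrm{d}\tau=W_c$; reading off the position and velocity blocks of $w(T)$ (denoted $p_f$ and $v_f$) yields \eqref{eq:bcx}. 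For \eqref{eq:bcl}: compute $\nabla_p\mathrm{SINR}$ explicitly. Writing $\mathrm{SINR}(p)=P_sH_s(T)/D(p)$ with $D(p)=P_aG_aG_d(\theta(p))L(\norm p)+\sigma_\eta^2$, the quotient rule gives $\nabla_p\mathrm{SINR}=-P_sH_s(T)\,D(p)^{-2}\nabla_p D(p)$ and the product rule gives $\nabla_p D(p)=P_aG_a\big(G_d(\theta)\nabla_p L(\norm p)+L(\norm p)G_d'(\theta)\nabla_p\theta\big)$. From \eqref{eq:FSPL}, $\nabla_p L(\norm p)=-2(\wlen/4\pi)^2 p/\norm p^4$; and differentiating the geometric identity \eqref{eq:theta} gives $-\sin\theta\,\nabla_p\theta=\nabla_p(-x/\norm p)$, the vector with components $-1/\norm p+x^2/\norm p^3$, $xy/\norm p^3$, $xz/\norm p^3$, so $\nabla_p\theta$ is obtained by dividing this vector by $-\sin\theta$. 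Assembling these pieces at $p=p_f$ and multiplying by $a_r$ reproduces \eqref{eq:bcl} exactly.

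The one genuinely delicate step is the terminal-gradient computation, specifically differentiating the reception angle $\theta$ through the implicit relation \eqref{eq:theta} and then tracking signs so that the minus sign from the quotient rule, the minus sign in $\nabla_p L$, and the $-\sin\theta$ prefactor combine into the positive-sign form appearing in \eqref{eq:bcl}. This step (and the differentiability of the terminal cost used in the transversality condition) presumes that at the optimizer $p_f\neq0$---so that $L$ and $\mathrm{SINR}$ are defined and smooth---and that $\sin\theta(p_f)\neq0$---so that $\theta$ is a smooth function of $p$ near $p_f$; we take these to hold. Everything else is standard minimum-principle and variation-of-constants bookkeeping, and I do not anticipate difficulty there. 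Finally, no sufficiency argument is needed: the statement only asserts that any optimal $u^\star$ must have the displayed form, which is precisely the content of the necessary conditions.
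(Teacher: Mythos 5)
Your proposal is correct and follows essentially the same route as the paper's proof: Pontryagin stationarity for $u^\star=-R_r^{-1}B^\mathrm{T}\lambda$, the adjoint flow $\lambda(t)=e^{A^\mathrm{T}(T-t)}\lambda(T)$ with transversality killing the velocity block, variation of constants to produce the Gramian relation \eqref{eq:bcx}, and the quotient/chain-rule computation of $\nabla_p\mathrm{SINR}$ for \eqref{eq:bcl}, with the sign bookkeeping coming out the same. Your added remarks on normality of the multiplier and the implicit smoothness assumptions ($p_f\neq 0$, $\sin\theta(p_f)\neq 0$) are sensible but do not change the argument.
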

\begin{proof}
Denote the Hamiltonian of the optimal control problem \eqref{eq:optT} as
\begin{equation}
H(w, u, \lambda)=\frac{1}{2}u^\mathrm{T}R_ru+\lambda^\mathrm{T}(Aw+Bu).
\end{equation}
Note that, since $R_r\succ0$, the Hamiltonian is strictly convex in $u$. Therefore, the minimum principle implies the optimal controller should satisfy the stationarity condition $\frac{\partial H}{\partial u}=0$. From this condition, we obtain
\begin{equation}\label{eq:Hu}
\frac{\partial H}{\partial u}=0 \Longrightarrow R_ru^\star+B^\mathrm{T}\lambda=0 \Longrightarrow u^\star=-R_r^{-1}B^\mathrm{T}\lambda.
\end{equation}
In addition, the adjoint equation yields
\begin{equation}\label{eq:Hw}
\dot{\lambda}=-\frac{\partial H}{\partial w} \Longrightarrow \dot{\lambda}=-A^\mathrm{T}\lambda \Longrightarrow \lambda(t)=e^{A^\mathrm{T}(T-t)}\lambda(T).
\end{equation}
Combining \eqref{eq:Hu}-\eqref{eq:Hw}, we derive
\begin{equation}\label{eq:Hu2}
u^\star(t)=-R_r^{-1}B^\mathrm{T}e^{A^\mathrm{T}(T-t)}\lambda(T).
\end{equation}
Therefore, computing the optimal control boils down to finding the parameter $\lambda(T)$. To that end, let us denote $\lambda(T)=\begin{bmatrix}\mu^\mathrm{T} & \nu^\mathrm{T} \end{bmatrix}^\mathrm{T}$, where $\mu, \nu\in\mathbb{R}^3$, and $p(T)=p_f=\begin{bmatrix}x_f & y_f & z_f \end{bmatrix}^\mathrm{T}$. From the transversality condition, since the terminal cost in \eqref{eq:optT} is independent of the velocities, it follows that $\nu=0$. On the other hand, considering \eqref{eq:SINR}-\eqref{eq:H}, the transversality condition for $\mu$ yields
\begin{equation}\label{eq:tempmu}
\begin{split}
\mu&=\frac{\partial (a_r\mathrm{SINR}(p_f))}{\partial p_f}=\frac{\partial}{\partial p_f}\frac{a_rP_sH_s(T)}{P_aG_aG_d(\theta(p_f))L\left(\norm{p_f}\right)+\sigma_{\eta}^2}\\&=-\frac{a_rP_sH_s(T)}{(P_aG_aG_d(\theta(p_f))L\left(\norm{p_f}\right)+\sigma_{\eta}^2)^2}\frac{\partial}{\partial p_f}\left(P_aG_aG_d(\theta(p_f))L\left(\norm{p_f}\right)+\sigma_{\eta}^2\right)\\&=-\frac{a_rP_aG_aP_sH_s(T)}{(P_aG_aG_d(\theta(p_f))L\left(\norm{p_f}\right)+\sigma_{\eta}^2)^2}\bigg(G_d(\theta(p_f))\frac{\partial L\left(\norm{p_f}\right)}{\partial p_f}\\&\qquad\qquad\qquad\qquad\qquad\qquad\qquad\qquad\qquad+L\left(\norm{p_f}\right)\frac{\partial G_d(\theta(p_f))}{\partial \theta(p_f)}\frac{\partial \theta(p_f)}{\partial\mathrm{cos}\theta(p_f)}\frac{\partial \mathrm{cos}\theta(p_f)}{\partial p_f} \bigg).
\end{split}
\end{equation}
Here, using equations \eqref{eq:theta} and \eqref{eq:FSPL}, we have the identities:
\begin{equation}\label{eq:temppartials}
\begin{split}
\frac{\partial L\left(\norm{p_f}\right)}{\partial p_f}&=\frac{\partial}{\partial p_f}\left(\frac{\wlen}{{4\pi \norm{p_f}}} \right)^2=-\left(\frac{\wlen}{4\pi}\right)^2\frac{2p_f}{\norm{p_f}^4},\\
\frac{\partial \theta(p_f)}{\partial \mathrm{cos}\theta(p_f)}&=-\frac{1}{\mathrm{sin}\theta(p_f)},\\
\frac{\partial \mathrm{cos}\theta(p_f)}{\partial p_f}&=\begin{bmatrix}-\frac{1}{\norm{p_f}}+\frac{x_f^2}{\norm{p_f}^3} & \frac{x_fy_f}{\norm{p_f}^3} & \frac{x_fz_f}{\norm{p_f}^3}\end{bmatrix}^\mathrm{T}.
\end{split}
\end{equation}
Combining equations \eqref{eq:tempmu}-\eqref{eq:temppartials} yields \eqref{eq:bcl}. Finally, plugging $u=u^\star$ from \eqref{eq:Hu2} in \eqref{eq:CW_agg}, we get $\dot{w}(t)=Aw(t)-BR_r^{-1}B^\mathrm{T}e^{A^\mathrm{T}(T-t)}\lambda(T)$. Integrating this over $[0, T]$, we obtain:
\begin{equation*}
w(T)=e^{AT}w_0-\int_0^Te^{A(T-\tau)}BR_r^{-1}B^\mathrm{T}e^{A^\mathrm{T}(T-\tau)}\mathrm{d}\tau\lambda(T).
\end{equation*}
Recalling that $\lambda(T)=\begin{bmatrix}\mu^\mathrm{T} & 0_3^\mathrm{T} \end{bmatrix}^\mathrm{T}$ and denoting $w(T)=[p_f^\mathrm{T}~v_f^\mathrm{T}]^\mathrm{T}$, we obtain \eqref{eq:bcx}, concluding the proof. \frQED
\end{proof}

\begin{remark}
Without knowing the power $P_s$ of the friendly signal $s(t)$ as well as the properties of the sender's antenna and channels, 
it becomes difficult for the jammer to get an estimate of the SINR expression \eqref{eq:SINR}. Nevertheless, given a known upper bound $P$ on $P_sH_s(t)$, such that $P\ge P_s H_s(t)$, the jammer can calculate an upper bound $\overline{\mathrm{SINR}}$ of the SINR \eqref{eq:SINR} according to the formula
\begin{equation}\label{eq:SINRb}
\overline{\mathrm{SINR}}(p(t))=\frac{P}{P_aH_a(p(t))+\sigma_{\eta}^2}.
\end{equation}
Subsequently, it is straightforward to derive the conditions for solving an optimal control problem with terminal cost dictated by $\overline{\mathrm{SINR}}$, instead of $\mathrm{SINR}$.
\end{remark}

\subsection{Optimal Cruising for Efficient Jamming over a Time Window}
\vspace{3mm}

After positioning at an appropriate initial jamming location, per the second part of Problem \ref{pr:maneuver}, the jammer wants to remain in an area that allows it to jam efficiently over a communication window $[T, T']$. We capture this objective with an optimization similar to \eqref{eq:optT}, but where now the jammer wants to minimize SINR over \textit{running} time instead of at a terminal time. Specifically, we capture this objective through the optimal control problem
\begin{equation}\label{eq:optR}
\begin{split}
&\min_{u\in\mathcal{U}} \int_T^{T'} \left(\frac{1}{2}u^\mathrm{T}(t) R_c u(t)+a_c{\mathrm{SINR}}(p(t))\right) \mathrm{d}t\\ &\mathrm{s.t.} \quad \dot{w}(t)=Aw(t)+Bu(t),\\&~\qquad w(T)~\mathrm{given},
\end{split}
\end{equation}
where $R_c\succ0$ and $a_c>0$.
Unlike \eqref{eq:optT}, the solution to \eqref{eq:optR} boils down to a two-point boundary value problem, the equations of which we describe in the following result.
\begin{theorem}
Let $u^\star\in\mathcal{U}$ be the optimal solution to \eqref{eq:optR}. Then:
\begin{equation}\label{eq:ustar2}
u^\star(t)=-\frac{1}{m}R_c^{-1}\lambda_v(t),
\end{equation}
where $\lambda_p,\lambda_v:[T, T']\rightarrow\mathbb{R}^3$, so that  $\lambda=[\lambda_p^\mathrm{T}~\lambda_v^\mathrm{T}]^\mathrm{T}$ solves the equation
\begin{equation}\label{eq:bvp}
\dot{\lambda}(t)={-}A^\mathrm{T}\lambda(t)-\begin{bmatrix}\frac{a_cP_aG_aP_sH_s(t)}{(P_aG_aG_d(\theta(p(t)))L\left(\norm{p(t)}\right)+\sigma_{\eta}^2)^2}\left(\left(\frac{\wlen}{4\pi}\right)^2\frac{2G_d(\theta(p(t)))p(t)}{\norm{p(t)}^4}{+}\frac{L\left(\norm{p(t)}\right)G_d'(\theta(p(t)))}{\mathrm{sin}\theta(p(t))}\begin{bmatrix}\frac{-1}{\norm{p(t)}}{+}\frac{x^2(t)}{\norm{p(t)}^3} \\ \frac{x(t)y(t)}{\norm{p(t)}^3} \\ \frac{x(t)z(t)}{\norm{p(t)}^3}\end{bmatrix} \right) \\ 0_3\end{bmatrix}
\end{equation}
under $\lambda(T')=0$ and \eqref{eq:CW_agg} for $u=u^\star$.
\end{theorem}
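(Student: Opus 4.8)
The plan is to apply Pontryagin's Minimum Principle exactly as in the proof of the preceding theorem, the only structural difference being that the $\mathrm{SINR}$ term now sits in the running cost rather than in the terminal cost. First I would form the Hamiltonian
\begin{equation*}
H(w,u,\lambda)=\tfrac12 u^\mathrm{T}R_c u + a_c\,\mathrm{SINR}(p) + \lambda^\mathrm{T}(Aw+Bu),
\end{equation*}
which is again strictly convex in $u$ since $R_c\succ0$. The stationarity condition $\partial H/\partial u=0$ then gives $R_c u^\star + B^\mathrm{T}\lambda = 0$, i.e. $u^\star=-R_c^{-1}B^\mathrm{T}\lambda$; writing $\lambda=[\lambda_p^\mathrm{T}~\lambda_v^\mathrm{T}]^\mathrm{T}$ and using the explicit form of $B$, one has $B^\mathrm{T}\lambda=\tfrac1m\lambda_v$, which yields \eqref{eq:ustar2}. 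Note that, in contrast to the first theorem, the adjoint no longer admits the closed-form matrix-exponential solution, because the running cost introduces a nonzero forcing term, so the controller must remain expressed through $\lambda_v(t)$.

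Next I would write the adjoint equation $\dot\lambda=-\partial H/\partial w$, which here reads $\dot\lambda = -A^\mathrm{T}\lambda - a_c\,\partial\,\mathrm{SINR}(p)/\partial w$. Since $\mathrm{SINR}$ depends on the state only through the position block $p$, the gradient $\partial\,\mathrm{SINR}/\partial w$ has a vanishing velocity block, which accounts for the $0_3$ in the lower half of the forcing term of \eqref{eq:bvp}. For the position block $\partial\,\mathrm{SINR}/\partial p$, I would reuse verbatim the chain-rule computation already carried out in the proof of the first theorem (equations \eqref{eq:tempmu}--\eqref{eq:temppartials}), since $\mathrm{SINR}$, $\theta(\cdot)$, $L(\cdot)$ and $G_d(\cdot)$ are the same functions here; this reproduces the bracketed expression appearing in \eqref{eq:bvp}, with $p_f$ replaced by the running position $p(t)$ and $H_s(T)$ replaced by $H_s(t)$.

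Finally, for the boundary conditions: the initial state $w(T)$ is prescribed in \eqref{eq:optR}, while the terminal state $w(T')$ is free and the problem carries no terminal cost, so the transversality condition forces $\lambda(T')=0$. Substituting $u=u^\star$ from \eqref{eq:ustar2} into \eqref{eq:CW_agg} and coupling the resulting state dynamics with the adjoint equation \eqref{eq:bvp} under these two endpoint conditions yields the claimed two-point boundary value problem, completing the argument.

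The derivation itself is essentially mechanical; the point that requires care is what can and cannot be claimed. Because the linear-quadratic structure of the first problem is lost — the state and adjoint are now coupled nonlinearly through the dependence of $\mathrm{SINR}$ on $p(t)$ — the characterization is a genuine boundary value problem with no closed-form reduction, and, since $\mathrm{SINR}$ is non-convex in $p$, the minimum principle here supplies only necessary conditions rather than a certificate of global optimality. I would therefore state the conclusion, as the theorem does, conditionally on $u^\star$ being an optimal solution, and would not attempt a sufficiency argument.
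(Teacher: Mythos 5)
Your proposal is correct and follows essentially the same route as the paper's own proof: Hamiltonian, stationarity via strict convexity in $u$, adjoint equation with the SINR gradient as forcing (reusing the chain-rule computation from the first theorem), and transversality giving $\lambda(T')=0$. Your added remarks on the loss of closed-form solvability and on necessity versus sufficiency are accurate but not needed for the result as stated.
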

\begin{proof}
Denote the Hamiltonian of the optimal control problem \eqref{eq:optR} as
\begin{equation}
H(w, u, \lambda)=\frac{1}{2}u^\mathrm{T}R_cu+a_c\mathrm{SINR}(p(t))+\lambda^\mathrm{T}(Aw+Bu).
\end{equation}
Since $R_c\succ0$, the Hamiltonian is strictly convex in $u$, and thus the optimal control must satisfy:
\begin{equation}\label{eq:Hu2b}
\frac{\partial H}{\partial u}=0 \Longrightarrow R_cu^\star+B^\mathrm{T}\lambda=0 \Longrightarrow u^\star=-R_c^{-1}B^\mathrm{T}\lambda \Longrightarrow u^\star=-\frac{1}{m}R_c^{-1}\lambda_v,
\end{equation}
which yields \eqref{eq:ustar2}. In addition, from the adjoint equation, we obtain:
\begin{equation}
\dot{\lambda}(t)=-\frac{\partial H}{\partial w}=-A^\mathrm{T}\lambda(t)-\begin{bmatrix}\frac{a_c\partial {\mathrm{SINR}}(p(t))}{\partial p(t)} \\ 0_3\end{bmatrix}.
\end{equation}
Note that we can calculate the expression for $\frac{\partial {\mathrm{SINR}}(p(t))}{\partial p(t)}$ similarly to \eqref{eq:tempmu}-\eqref{eq:temppartials}, which yields the flow equation in \eqref{eq:bvp}. Finally, the transversality condition yields $\lambda(T')=0$, which provides the boundary condition in \eqref{eq:bvp}. \frQED
\end{proof}

\begin{figure}[!t]
  \centering
  \includegraphics[width=0.76\linewidth]{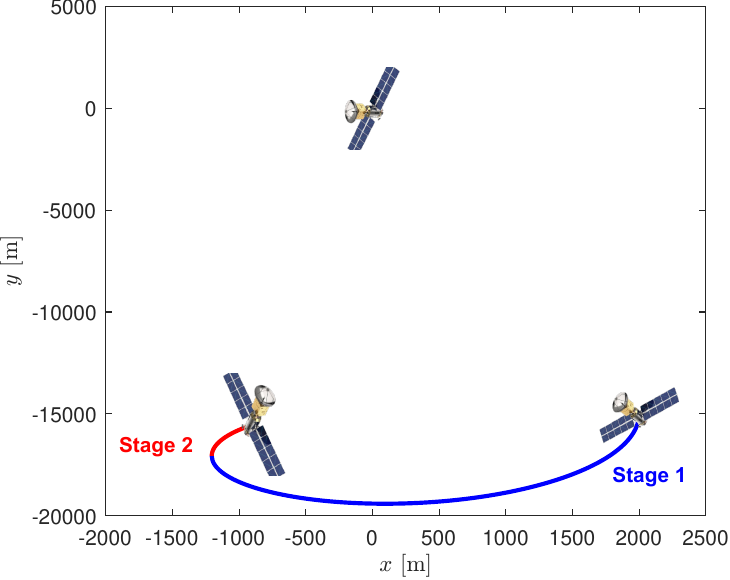}
\caption{Jammer satellite trajectory during the repositioning stage (blue) and during the jamming stage (red). The defender satellite is located at the origin of the local frame. The jammer maneuvers itself to a position from which it can view the defender's antenna. This position minimizes SINR while minimizing fuel consumption. \\ \\ \\ }
\label{fig:path}
\end{figure}

\begin{figure}[!t]
\centering
\begin{subfigure}{.49\textwidth}
  \centering
  \includegraphics[width=1\linewidth]{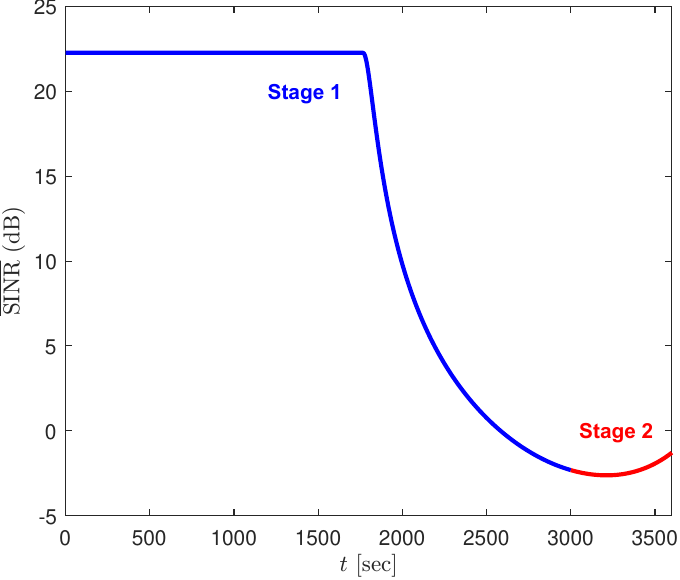}
  \caption{Trajectories of the SINR upper bound of the uplink.}
  \label{fig:sub1}
\end{subfigure}%
\hfill
\begin{subfigure}{.5\textwidth}
  \centering
  \includegraphics[width=1\linewidth]{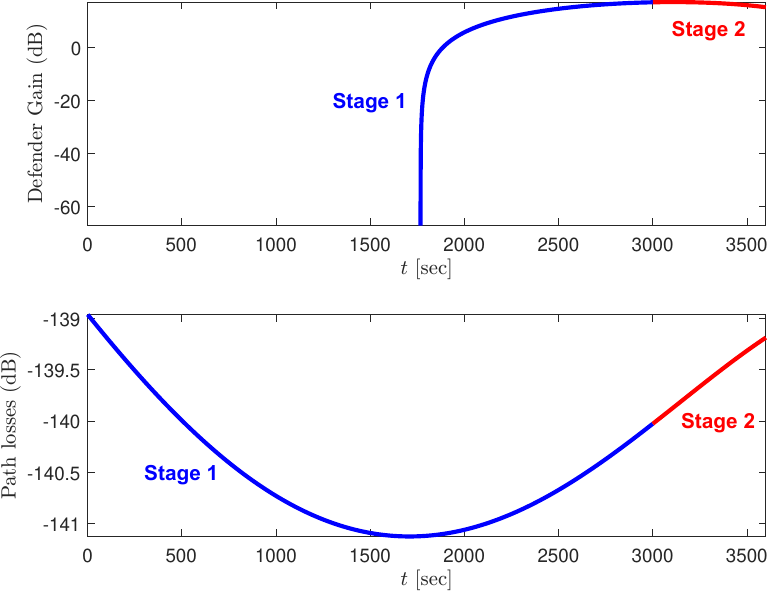}
  \caption{Trajectories of the defender's antenna gain and the free-space path losses.}
  \label{fig:sub2}
\end{subfigure}
\caption{Communication-related trajectories generated during the maneuver.}
\label{fig:coms}
\end{figure}

\begin{figure}[!t]
  \centering
  \includegraphics[width=0.8\linewidth, height=1\linewidth]{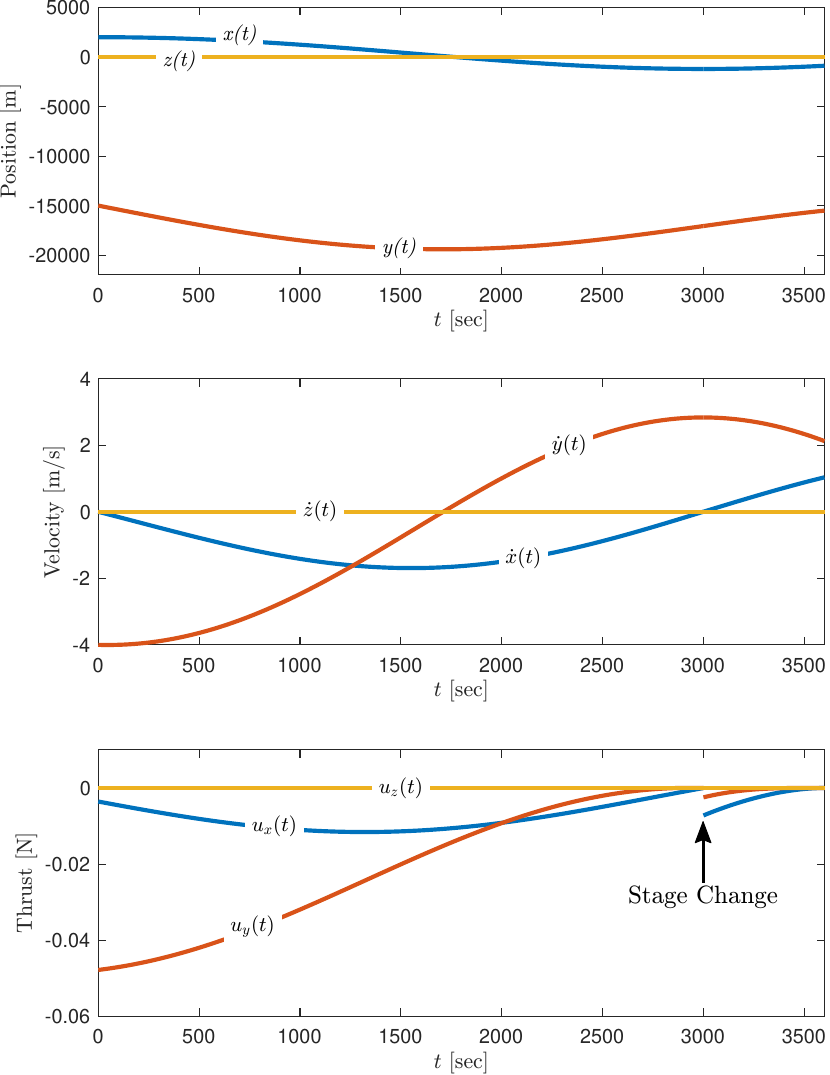}
\caption{State and control input trajectories generated during the maneuver.}
\label{fig:traj}
\end{figure}

\section{Simulation Results}\label{sec:sims}
\vspace{3mm}
We consider a defender satellite on a circular low Earth orbit at $550~\mathrm{km}$ altitude, and a jammer with a mass of $m=300~\mathrm{kg}$ in its vicinity. The jammer is positioned at $x(0)=2~\mathrm{km}$, $y(0)=-15~\mathrm{km}$, and $z(0)=0~\mathrm{km}$ in the local reference frame, with initial velocities of $\dot{x}(0)=\dot{z}(0)=0$ and $\dot{y}(0)=-4~\mathrm{m/s}$, implying that the defender and the jammer are in the same orbital plane.

The purpose of the defender is to communicate with a ground station at a frequency of $14~\mathrm{GHz}$ and a bandwidth of $500~\mathrm{MHz}$, while the purpose of the jammer is to disrupt this communication. Given Assumption \ref{ass:1}, we assume the transmission antenna gains of the ground station and the jammer are constant and equal to $30~\mathrm{dB}$, i.e., $G_s=G_a=10^3$, and that their powers are equal to $P_s=10~\mathrm{W}$ and $P_a=1~\mathrm{W}$. In addition, the antenna gain of the defender depends on the angle of signal reception $\theta$ according to the formula 
\begin{equation*}
G_d(\theta)=\begin{cases}10^4(\mathrm{cos}\theta)^2, & \theta\in(-90^\circ,~90^\circ),\\ 0, & \mathrm{otherwise}, \end{cases}
\end{equation*}
so that signals received behind the defender are assumed to be completely rejected. In addition, we assume that the thermal noise on the defender's antenna has a temperature of $250~\mathrm{K}$.

Given that communication between the ground station and the defender satellite is expected to take place in approximately $50$ to $60$ minutes, the jammer maneuvers itself to reposition to an appropriate jamming location at $T=3000~\mathrm{sec}$, and to remain there until $T'=3600~\mathrm{sec}$. To that end, it solves the optimal control problems \eqref{eq:optT} and \eqref{eq:optR} to obtain appropriate thrust profiles, with parameters chosen as $R_r=\frac{1000}{T}I_3$, $a_r=1$, $R_c=\frac{1000}{T'-T}I_3$, $a_c=\frac{100}{T'-T}$. Note that, since the uplink communication parameters are not completely known by the jammer, the jammer uses the SINR upper bound \eqref{eq:SINRb} in its formulas, with $P=P_sG_sG_d(0)L(550\cdot10^3)$. In other words, the jammer assumes the best-case communication between the defender and the ground station, wherein the defender is stationed directly above the ground station.

Figures \ref{fig:path}-\ref{fig:traj} depict the resulting maneuver. To minimize fuel consumption while maximizing jamming efficiency, we observe that the jammer opts to maneuver itself within a 90-degree angle from the defender's antenna axis, though not by a large margin. At the same time, the overall distance from the defender remained largely unchanged. 
 Meanwhile, Figure \ref{fig:coms} shows the impact of the jamming on the defender's uplink communication, where we notice that the SINR during jamming remained below $0~{\mathrm{dB}}$ -- a sign that the jamming was effective at disrupting the defender. Moreover, we observe that most of the increase in jamming efficiency was due to an optimized angle with respect to the defender's antenna, rather than a smaller distance from it.

\section{Conclusion}\label{sec:conc}
\vspace{3mm}
We consider the problem of maneuvering a satellite in low Earth orbit to efficiently jam the uplink communication of another nearby satellite.  We cast this objective as a two-stage optimal control problem, wherein the jammer maneuvers itself to minimize SINR while minimizing fuel consumption.

Future work involves developing a security mechanism that the defender satellite could use to shield itself against the proposed jamming attack.

\section*{Acknowledgments}
\vspace{3mm}
This work was sponsored in part by grants AFRL FA9550-23-1-0646 and ONR N00014-22-1-2703.

\bibliography{sample}

\end{document}